\documentclass[11pt]{article}
\usepackage[T1]{fontenc}
\usepackage[margin=1in]{geometry}
\usepackage{biolinum}
\usepackage{libertineMono}
\usepackage[utf8]{inputenc}
\usepackage{amsmath,amssymb,amsthm}
\usepackage{graphicx}
\usepackage{xcolor}
\usepackage{bbm}
\usepackage{enumerate}
\usepackage{latexsym}
\usepackage{mathtools}
\usepackage{mathpazo}
\usepackage{microtype,booktabs,array}
\usepackage[colorlinks=true,linkcolor=blue,anchorcolor=blue,citecolor=red,urlcolor=magenta,pdfusetitle]{hyperref}
\usepackage[capitalize,nameinlink,noabbrev]{cleveref}
\allowdisplaybreaks[1]
\newtheorem{theorem}{Theorem}

\newtheorem{lemma}[theorem]{Lemma} 
\newtheorem{corollary}[theorem]{Corollary}

\DeclareMathOperator{\SN}{SN}
\DeclareMathOperator{\SR}{SR}
\DeclareMathOperator{\spn}{span}
\DeclareMathOperator{\rank}{rank}
\DeclareMathOperator{\im}{im}

\DeclareMathOperator{\Density}{D}

\DeclareMathOperator{\Sym}{Sym}
\newcommand{\C}{\mathbb C}
\newcommand{\N}{\mathbb{Z}_{\geq 0}}
\newcommand{\Q}{\mathbb Q}
\newcommand{\cS}{\mathcal S}
\newcommand{\cD}{\mathcal D}
\newcommand{\bra}[1]{\langle #1\rvert}
\newcommand{\ket}[1]{\lvert #1\rangle}

\newcommand{\ketbra}[1]{\ket{#1}\bra{#1}}
\newcommand{\MPPT}{M_{\mathrm{PPT}}}
\newcommand{\doi}[1]{\href{https://doi.org/#1}{\nolinkurl{doi:#1}}}
\newcommand{\arxiv}[1]{\href{https://arxiv.org/abs/#1}{\nolinkurl{arXiv:#1}}}
\newcommand{\closure}[1]{\overline{#1}^{\,\mathrm{Zar}}}
\newcolumntype{L}[1]{>{\raggedright\arraybackslash}p{#1}}

\title{PPT states of almost maximal Schmidt number}

\author{
Nathaniel Johnston\\
\small{Mount Allison University}\\
\and
Benjamin Lovitz\\
\small{Concordia University}\\
}
\date{\today}

\begin{document}

\maketitle

\begin{abstract}
    We construct PPT states on $\C^m \otimes\C^n$ that have Schmidt number asymptotically approaching the smaller local dimension. More specifically, we construct a PPT state with Schmidt number at least
    \begin{align*}
        \left\lceil \frac{m + n - \sqrt{(m - n)^2 + 4(m + n - 1)}}{2} \right\rceil.
    \end{align*}
    In the case of equal local dimensions ($m = n$), this becomes $n - \lfloor\sqrt{2n - 1}\rfloor$, far exceeding previous constructions, which achieved $n/2 + O(1)$. In the case of unequal local dimensions, our result shows that there exists a PPT state on $\C^n \otimes \C^{3n-4}$ with Schmidt number at least $n-1$.
\end{abstract}

\section{Introduction}\label{sec:introduction}

The \textit{Schmidt number} of a bipartite quantum state is a coarse (but fundamental) measure of the entanglement present in that state~\cite{TH00,Wat18}. For example, a state has Schmidt number 1 if and only if it is separable. At the other extreme, the Schmidt number cannot exceed the minimum local dimension, and this upper bound is attained by maximally entangled states (e.g., Bell states when the local dimensions both equal $2$).

Positive-partial-transpose (PPT) states are undistillable under local operations and classical communication~\cite{HHH98}, which historically motivated the view that their entanglement is comparatively weak \cite{HLLMH18}. In fact, every separable state is PPT, and when the product of the local dimensions is $6$ or less, every PPT state is separable~\cite{Peres96,HHH96}. It is therefore natural to expect that PPT states cannot have large Schmidt number---a problem that was already considered in~\cite{SBL01}, where it was conjectured that, when both local dimensions equal $3$, PPT states cannot have Schmidt number exceeding $2$.

This conjecture was proved in~\cite{YLT16}, and it spurred significant interest in the question of how large the Schmidt number of a PPT state can be (as a function of its local dimensions). For $m,n \geq 2$, let $\MPPT(m,n)$ be the maximum Schmidt number of a PPT state acting on $\C^m \otimes \C^n$. Assuming $n \leq m$ for convenience, it trivially holds that $\MPPT(m,n) \leq n$. The result of \cite{YLT16} says exactly that $\MPPT(3,3) \leq 2$, and since there exist entangled PPT states acting on $\C^3 \otimes \C^3$ \cite{Horodecki97} we actually have $\MPPT(3,3) = 2$.

Much of the recent progress on this problem has been devoted to proving lower bounds on $\MPPT(m,n)$. It was shown in~\cite{HLLMH18} that $\MPPT(n,n) \geq \lceil n/4\rceil$ when $n$ is even, which was subsequently improved in \cite{SP18,PV19} to $\MPPT(n,n) \geq n/2$ (see also~\cite{Par26}). It was shown in~\cite{Nechita18} that, for fixed $n$, shifted Gaussian matrices are almost surely PPT with Schmidt number greater than $\lfloor (n-1)/16 \rfloor$ as $m \rightarrow \infty$. The paper~\cite{KG24} constructed PPT states of Schmidt number $(n+1)/2$ for odd $n$ with $m = (n+1)(n+3)/8$, along with a $5 \times 5$ example of Schmidt number 3. The subsequent work~\cite{KG26} gives a $4 \times 5$ example of Schmidt number 3, a $7 \times 7$ example of Schmidt number 4, and a $9 \times 9$ example of Schmidt number 5, so $\MPPT(5,5) \geq \MPPT(4,5) \geq 3$, $\MPPT(7,7) \geq 4$, and $\MPPT(9,9) \geq 5$.

Notably, none of these lower bounds were much better than $\MPPT(m,n) \geq n/2$, and the problem of constructing PPT states with Schmidt number substantially larger than $n/2$ remained open. In this work, we break this barrier and show that there exist PPT states with Schmidt number asymptotically approaching $n$ (the maximum possible Schmidt number). More specifically, we show that $\MPPT(m,n) \geq n - O(\sqrt{n})$. Equivalently, via a standard duality between Schmidt number and $k$-positivity, and between PPT states and decomposable linear maps~\cite{Sto82,Par26}, we show that there exist $k$-positive indecomposable maps acting on the set of $n \times n$ complex matrices when $k = n - O(\sqrt{n})$.

\subsection{Main results}\label{sec:main_statements}

Our main contribution is to construct PPT states of asymptotically much larger Schmidt number than previous constructions (we still take $n \leq m$ for convenience). While the previous lower bounds on $\MPPT(m,n)$ presented are all at most $n/2 + O(1)$, we give an improved lower bound of $\MPPT(m,n) \geq n - O(\sqrt{n})$. We emphasize that while our construction improves the asymptotic scaling of previous constructions, it does not improve the lower bounds on $\MPPT(m,n)$ in the small-dimensional cases mentioned above.

Let $\SN(\rho)$ be the Schmidt number of $\rho$. Our main result is the following:

\begin{theorem}\label{thm:intro_main}
    Let $m,n \geq 2$ be integers. There exists a PPT state $\rho \in \Density(\C^m \otimes \C^n)$ with
    \begin{align}\label{eq:mainlower}
        \SN(\rho) & \geq \left\lceil \frac{m + n - \sqrt{(m - n)^2 + 4(m + n - 1)}}{2} \right\rceil.
    \end{align}
\end{theorem}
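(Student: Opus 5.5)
The plan is to take the PPT state to be the normalized projection onto the orthogonal complement of a generic unextendible product basis (UPB). The Schmidt-number bound then becomes an algebraic-geometric statement about low-rank matrices inside a generic linear subspace.

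Let $d = m+n-1$, and choose product vectors $a_1\otimes b_1,\dots,a_d\otimes b_d \in \C^m\otimes\C^n$ that are pairwise orthogonal and form a UPB; such UPBs of size $m+n-1$ exist for all $m,n\geq 2$ (outside the trivial cases), and we want to work with a \emph{generic} member of the family. Put $\Pi = \sum_j \ketbra{a_j}\otimes\ketbra{b_j}$ and
\[
\rho = \frac{I - \Pi}{mn - d}.
\]
Then $\rho$ is PPT, since $\Pi^\Gamma = \sum_j \ketbra{a_j}\otimes\ketbra{\bar b_j}$ is again a projection onto orthonormal product vectors, so $\rho^\Gamma = (I-\Pi^\Gamma)/(mn-d) \geq 0$. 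Let $R = \im(\rho)$, a subspace of dimension $mn-d$.

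Any decomposition $\rho = \sum_i p_i \ketbra{v_i}$ with $p_i>0$ forces every $v_i\in R$ and forces the $v_i$ to span $R$. Hence $\SN(\rho)\geq k$ follows once I show that the vectors $v \in R$ with $\SR(v) \le k-1$ do \emph{not} span $R$. Identify $\C^m\otimes\C^n$ with $m\times n$ matrices. Then $R$ is the linear space of matrices $V$ satisfying the $d$ linear conditions $a_j^{*} V \bar b_j = 0$. The set $X_{k-1}\cap R$ is the intersection of the determinantal variety $X_{k-1}$ of rank $\le k-1$ matrices with a codimension-$d$ linear space. The variety $X_{k-1}$ has codimension $(m-k+1)(n-k+1)$ in the ambient space.

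It therefore suffices to exhibit a nonzero $w\in R$ with $\langle w, v\rangle = 0$ for all $v\in X_{k-1}\cap R$. Equivalently, the linear span of $\closure{X_{k-1}\cap R}$ is a proper subspace of $R$. The choice of $k$ in \eqref{eq:mainlower} is exactly the smallest integer with $(m-k)(n-k)\le m+n-1$, which is the root computation $k^2-(m+n)k+mn-(m+n-1)=0$. My plan is to use the tangent-space description: at a matrix $V$ of rank $r$, the tangent space to $X_r$ is $\{V' : P_{\ker V^{*}}\, V'\, P_{\ker V} = 0\}$, whose normal space is $\ker(V^{*})\otimes\overline{\ker V}$, of dimension $(m-r)(n-r)$. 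I would then argue by dimension counting that, for a generic UPB, the irreducible components of $X_{k-1}\cap R$ are contained in a fixed hyperplane of $R$. The mechanism is that at each smooth point the normal data contributed by the $d$ UPB constraints has more directions ($m+n-1$) than the $(m-k)(n-k)$ normal directions available, and this leaves a common nontrivial linear relation.

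The main obstacle is this last step: turning the numerical inequality $(m-k)(n-k)\le m+n-1$ into a proof that the low-rank part of $R$ is linearly degenerate. Genericity must be handled carefully, because a UPB is not a generic codimension-$d$ subspace; orthogonality and unextendibility impose structure. I would first try an explicit UPB family, such as the standard "tiles"- or Vandermonde-type construction, with free parameters. I would verify the degeneracy via a Jacobian/rank computation at one explicit parameter point and then propagate it by upper semicontinuity to generic parameters. If the UPB structure fights the count, the fallback is to replace $\Pi$ by a PPT state whose kernel is spanned by $d$ product vectors that are generic but not orthogonal, and to restore PPT-ness by a small perturbation inside $R$, which does not change the range.
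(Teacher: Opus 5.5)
\textbf{The key step is false.} The step you flag as the main obstacle is not just hard. The claim that the Schmidt-rank-$\le(k-1)$ vectors in $R$ fail to span $R$ is false for every $k\ge 3$, so your criterion can never certify more than $\SN(\rho)\ge 2$ for this state.

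Write $R=\{V:\ u_j^TVw_j=0,\ j=1,\dots,m+n-1\}$ with $u_j=\bar a_j$ and $w_j=\bar b_j$. Unextendibility of a UPB of size $m+n-1$ forces every $m$ of the $a_j$ and every $n$ of the $b_j$ to be linearly independent. (Otherwise, split the indices into $m-1$ and $n$ and build a product vector orthogonal to all of them.) The same holds for the generic product vectors in your fallback.

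Replace $V$ by $MVN$ with $M,N$ invertible, which preserves rank. You may then assume $u_j=e_j$ for $j\le m$ and $w_{m-1+l}=e_l$ for $l\le n$. The constraints now read as follows:
\begin{itemize}
\item row $i$ of $V$ lies in $H_i:=\{y:\ y^Tw_i=0\}$ for each $i\le m$ (for $i=m$ this is just $V_{m1}=0$);
\item $\sum_i U_{l,i}V_{il}=0$ for $2\le l\le n$, where $U_{l,i}$ is the $i$-th coordinate of $u_{m-1+l}$.
\end{itemize}
General position gives $U_{l,i}\neq 0$ for all $i,l$, and $(w_i)_1\neq 0$. Hence $y\mapsto(U_{l,i}y_l)_{l\ge 2}$ maps each $H_i$ isomorphically onto $\C^{n-1}$. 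This yields $R\cong\{(z_1,\dots,z_m)\in(\C^{n-1})^m:\ \sum_i z_i=0\}$.

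The elements with $z$ supported on $\{i,m\}$ are matrices supported on rows $i$ and $m$, so they have Schmidt rank at most $2$. They span, so $R$ is in fact a direct sum of $m-1$ subspaces of Schmidt-rank-$\le 2$ vectors. Consequently no nonzero $w\in R$ is orthogonal to $\cD_{k-1}\cap R$. Your genericity/semicontinuity plan and your non-orthogonal fallback fail for the same reason. (Separately, UPBs of size $m+n-1$ do not exist in all dimensions; for example, there are none when $\min\{m,n\}=2$.)

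\textbf{What is missing.} The inequality $(m-r)(n-r)>m+n-1$ is equivalent to $\dim\cS+\dim\cD_r<mn$ for a subspace $\cS$ of \emph{dimension} $m+n-1$. You have placed the $(m+n-1)$-dimensional object in the kernel. That leaves a range of dimension $(m-1)(n-1)$, where no such count is available. The heuristic that $m+n-1$ constraints against $(m-k)(n-k)$ normal directions force a common linear relation has no basis.

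The paper applies the count to the range instead. It takes $\sigma_{m,n}$, the normalized projector onto $\Sym^{m+n-2}(\C^2)\subseteq\Sym^{m-1}(\C^2)\otimes\Sym^{n-1}(\C^2)$. This state has rank $m+n-1$ but a positive definite partial transpose, with minimal eigenvalue $C>0$. It then mixes in $\ketbra{g}$ with weight $2C/(2C+1)$, where $g$ has algebraically independent entries.

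Since $\im(\sigma_{m,n})$ has a basis over $\overline{\Q}$, $g\notin\closure{\im(\sigma_{m,n})+\cD_r}$. So every Schmidt-rank-$\le r$ vector of $\im(\rho)=\im(\sigma_{m,n})+\spn(g)$ lies in $\im(\sigma_{m,n})$ and cannot span $\im(\rho)$. The PPT property survives because $(\ketbra{g})^\Gamma\ge-\tfrac12 I$.
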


In fact, our proof of this theorem is completely constructive: we make its statement more precise in \Cref{thm:main_explicit}, where we demonstrate an explicit PPT state $\rho \in \Density(\C^m \otimes \C^n)$ with Schmidt number satisfying the claimed bound.

Of particular note are the specializations of \Cref{thm:intro_main} to the cases when $m = n$ and when $m \geq 3n-4$ (these corollaries both follow trivially from \Cref{thm:intro_main}, so we do not prove them explicitly):

\begin{corollary}\label{cor:intro_equal_dims}
    Let $n \geq 2$ be an integer. There exists a PPT state $\rho \in \Density(\C^n \otimes \C^n)$ with
    \[
        \SN(\rho) \geq n - \left\lfloor \sqrt{2n-1} \right\rfloor.
    \]
    In particular, $\displaystyle \lim_{n\rightarrow\infty} \frac{\MPPT(n,n)}{n} = 1$.
\end{corollary}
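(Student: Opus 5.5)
The corollary follows from \Cref{thm:intro_main} by setting $m = n$ and simplifying. The only real work is an elementary integer computation. With $m=n$ the bound \eqref{eq:mainlower} becomes
\[
\SN(\rho) \geq \left\lceil \frac{2n - \sqrt{4(2n-1)}}{2} \right\rceil = \left\lceil n - \sqrt{2n-1} \right\rceil .
\]
Since $n$ is an integer, $\lceil n - x\rceil = n - \lfloor x \rfloor$ for every real $x$. This holds because $\lceil -x\rceil = -\lfloor x\rfloor$. Taking $x=\sqrt{2n-1}$ gives exactly $n - \lfloor \sqrt{2n-1}\rfloor$, which proves the first claim.

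For the limit, I would sandwich $\MPPT(n,n)/n$. On one side, the trivial bound $\MPPT(n,n)\le n$ from the introduction gives $\MPPT(n,n)/n \le 1$. On the other side, the first claim gives
\[
\frac{\MPPT(n,n)}{n} \geq 1 - \frac{\lfloor\sqrt{2n-1}\rfloor}{n} \geq 1 - \frac{\sqrt{2n}}{n}.
\]
The right-hand side tends to $1$ as $n\to\infty$, so the limit equals $1$ by squeezing.

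There is no real obstacle here; all the depth is in \Cref{thm:intro_main}. The one point to check carefully is the ceiling/floor identity. It requires exactly that $n$ is an integer. It also needs no assumption on whether $2n-1$ is a perfect square, since the identity $\lceil -x\rceil=-\lfloor x\rfloor$ holds for all real $x$.
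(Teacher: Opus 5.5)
Your proposal is correct and takes the same route as the paper. The paper says this corollary follows trivially from \Cref{thm:intro_main} with $m=n$ and leaves it unproved. Your simplification $\lceil n-\sqrt{2n-1}\rceil = n-\lfloor\sqrt{2n-1}\rfloor$, together with squeezing against the trivial bound $\MPPT(n,n)\le n$, fills in exactly those omitted details.
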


\begin{corollary}\label{cor:intro_large_unequal_dims}
    Let $n \geq 2$ and $m \geq 3n-4$ be integers. There exists a PPT state $\rho \in \Density(\C^m \otimes \C^n)$ with $\SN(\rho) \geq n - 1$.
\end{corollary}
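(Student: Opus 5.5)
The plan is to obtain this directly from \Cref{thm:intro_main}, with no further construction. Apply \Cref{thm:intro_main} to the given pair $(m,n)$ and write
\[
    x = \frac{m + n - s}{2}, \qquad s = \sqrt{(m-n)^2 + 4(m+n-1)} .
\]
This gives a PPT state $\rho \in \Density(\C^m\otimes\C^n)$ with $\SN(\rho) \geq \lceil x \rceil$. So it suffices to show $\lceil x \rceil \geq n-1$. Because $n-1$ is an integer, this is equivalent to the strict inequality $x > n-2$.

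Rearranging, $x > n-2$ is equivalent to $m + n - s > 2n - 4$, that is, to
\[
    s < m - n + 4 .
\]
The hypothesis $m \geq 3n-4$ together with $n \geq 2$ gives $m - n + 4 \geq 2n > 0$. Since both sides are then nonnegative, we may square without changing the direction of the inequality. The condition becomes
\[
    (m-n)^2 + 4(m+n-1) < (m-n)^2 + 8(m-n) + 16 .
\]
This simplifies to $4m + 4n - 4 < 8m - 8n + 16$, i.e.\ $12n - 20 < 4m$, i.e.\ $m > 3n - 5$. For integers this is exactly $m \geq 3n-4$, so the hypothesis yields $x > n-2$, hence $\lceil x\rceil \geq n-1$, which proves the corollary.

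The only step needing care is the strict versus non-strict inequality at the ceiling. One must use $\lceil x\rceil \geq n-1 \iff x > n-2$, not $x \geq n-1$; the latter would demand a larger $m$ and would not match the threshold $3n-4$. I also need the sign check $m-n+4>0$ so that squaring is legitimate; everything else is routine algebra.
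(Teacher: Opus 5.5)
Your proposal is correct and takes the paper's route: the paper says this corollary follows trivially from \Cref{thm:intro_main} and omits the details. Your computation supplies them, including the step $\lceil x\rceil \geq n-1 \iff x > n-2$ and the check $m-n+4>0$ before squaring, which reduces the claim to $m > 3n-5$.
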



\subsection{Technical overview}

Let $\Density(\C^m \otimes \C^n)$ be the set of density matrices (quantum states) acting on $\C^m \otimes \C^n$. We denote arbitrary vectors using boldface $\mathbf{v}$ and unit vectors using kets $\ket{v}$. The \textit{Schmidt rank} of a non-zero vector $\mathbf{v} \in \C^m \otimes \C^n$, denoted by $\SR(\mathbf{v})$, is the rank of $\mathbf{v}$ when viewed as an $m \times n$ matrix in the natural way \cite{Wat18}.

For a state $\rho \in \Density(\C^m \otimes \C^n)$, we define the \textit{Schmidt number} $\SN(\rho)$ to be the minimum integer $r$ for which one can write
\[
    \rho = \sum_j p_j \ket{v_j}\bra{v_j}
\]
for some probability vector $\mathbf{p}$ and unit vectors $\ket{v_j} \in \C^m \otimes \C^n$ with $\SR(\ket{v_j}) \leq r$. We use $\rho^{\Gamma} := (I \otimes T)(\rho)$ to denote the partial transpose of $\rho$. We say that $\rho$ has \textit{positive partial transpose} if $\rho^{\Gamma}$ is positive semidefinite. As above, we assume $n \leq m$ for convenience.

Before giving a high-level overview of the construction, we first note an elementary fact: For any $r \in \{0,1,\dots, n-1\}$ and subspace $\mathcal{W} \subseteq \C^m \otimes \C^n$ of dimension less than $(m-r)(n-r)$, there exists a unit vector $\ket{v} \in \C^m \otimes \C^n$ such that $\spn(\mathcal{W},\ket{v})$ is not spanned by vectors of Schmidt rank $\leq r$. To see this, note that the set $\cD_r \subset \C^m \otimes \C^n$ of vectors of Schmidt rank $\leq r$ has dimension $mn - (m-r)(n-r)$ \cite{CMW08}, so $\dim(\mathcal{W}) + \dim(\cD_r) < mn$. It follows that $\mathcal{W} + \cD_r \neq \C^m \otimes \C^n$ (see the proof of~\Cref{lem:perturb_algebraic}).

We now describe the idea of the construction. We let $\sigma \in \Density(\C^m \otimes \C^n)$ be any state of rank $m+n-1$ for which $\sigma^{\Gamma}$ is positive {definite}. Let $L$ be the image of $\sigma$. By the dimension argument above, as long as $m+n-1 < (m-r)(n-r)$, for any $t > 0$ the support of the state
\[
    \sigma_t := \frac{\sigma + t \ket{v}\bra{v}}{1+t}
\]
is not spanned by vectors of Schmidt rank at most $r$, and hence $\SN(\sigma_t) > r$. Choosing $r$ maximal subject to the inequality $m+n-1 < (m-r)(n-r)$ produces the bound stated by \Cref{eq:mainlower}. Since $\sigma^{\Gamma}$ is positive definite, it holds that $\sigma_t$ is PPT for all sufficiently small $t$.

If one could find a state $\sigma$ with positive definite partial transpose and rank even smaller than $m+n-1$, then this would give rise to PPT states $\sigma_t$ of even larger Schmidt number. However, this is not possible: It is known that any state with positive definite partial transpose must have rank at least $m+n-1$~\cite[Lemma~3 and Section~III.C]{li2017indistinguishability}, so new ideas will be needed to construct PPT states of larger Schmidt number than that of \Cref{thm:intro_main}.

For the construction, we let $\sigma$ be the normalized projection onto the following subspace $L$. Let $\Sym^{m-1}(\C^2)\subseteq (\C^2)^{\otimes m-1}$ be the \textit{symmetric subspace} of tensors $\mathbf{v}$ for which $\mathbf{v}_{i_1,\dots,i_{m-1}}=\mathbf{v}_{i_{\pi(1)},\dots,i_{\pi(m-1)}}$ for all indices $(i_1,\dots,i_{m-1})$ and all permutations $\pi \in \mathfrak{S}_{m-1}$. Then $\dim(\Sym^{m-1}(\C^2))=m$, and under the identification $\C^m \cong \Sym^{m-1}(\C^2)$ and $\C^n \cong \Sym^{n-1}(\C^2)$, our subspace $L$ is given by $L:=\Sym^{m+n-2}(\C^2) \subseteq \Sym^{m-1}(\C^2) \otimes \Sym^{n-1}(\C^2)$. Clearly $L$ has dimension $m+n-1$, and by~\cite[Equation~(8)]{LSEBM25}, the normalized projection has positive definite partial transpose. For this subspace $L$, we specify an explicit choice of $\ket{v}$ and $t$ for which the state $\sigma_t$ is PPT and has the desired Schmidt number.

Interestingly, for $m=n$ the orthogonal complement $L^{\perp}$ coincides with the subspace $U_{2,m-1}$ studied in~\cite{derksen2025constructive}. It is not hard to see that $L^{\perp}$ is in fact an \textit{entangled subspace} (meaning it avoids all non-zero product vectors) of maximum dimension, but this property is not needed for our construction.

\subsection*{Acknowledgments}

The authors used ChatGPT-6 Astra to obtain a first version of this proof, as well as to write an initial draft of this manuscript. The authors then verified and simplified the proof, and heavily revised the initial draft. The authors take full responsibility for the correctness, exposition, and attribution in the final manuscript. N.J.\ acknowledges support from NSERC Discovery Grant number RGPIN-2022-04098. B.L.\ acknowledges support from NSERC Discovery Grant number RGPIN-2026-05413.

\section{Proof of the main result}\label{sec:main_proof}

We now prove \Cref{thm:intro_main}. The proof consists of two main ingredients: a density matrix that is low-rank but whose partial transpose is positive definite (\Cref{sec:unperturbed_separable}), and a perturbation of that density matrix that necessarily has large Schmidt number (\Cref{sec:perturb}).

\subsection{A low-rank state with positive-definite partial transpose}\label{sec:unperturbed_separable}

For an integer $d \geq 0$, let $V_d : \C^{d+1} \to \Sym^d(\C^2)$ be the isometry defined by
\begin{align}\label{eq:Vd_defn}
    V_d\ket{j} = \sqrt{\frac{1}{\binom{d}{j}}}\sum_{\substack{I \in \{0,1\}^d \\ |I|=j}} \ket{i_1} \otimes \ket{i_2} \otimes \cdots \otimes \ket{i_d}
\end{align}
for $j \in \{0,1,\ldots,d\}$. That is, $V_d\ket{j}$ are the normalized Dicke states, which form an orthonormal basis of $\Sym^d(\C^2)$.

Write $P_d := V_dV_d^*$ for the orthogonal projection onto $\Sym^d(\C^2) \subseteq (\C^2)^{\otimes d}$. We can now introduce the state that we will need:

\begin{lemma}\label{lem:low_rank_pt_high_rank}
    Let $m,n \geq 2$ be integers and define the state $\sigma_{m,n} \in \Density(\C^m \otimes \C^n)$ by
    \[
        \sigma_{m,n} := \frac{1}{m+n-1}\big(V_{m-1}^* \otimes V_{n-1}^*\big) P_{m+n-2} \big(V_{m-1}\otimes V_{n-1}\big),
    \]
    where $V_{m-1}$ and $V_{n-1}$ are as in \Cref{eq:Vd_defn} and $P_{m+n-2} := V_{m+n-2}V_{m+n-2}^*$. Then
    \begin{align}\label{eq:lam_min_sig}
        \rank(\sigma_{m,n}) = m + n - 1 \quad \text{and} \quad \lambda_{\min}\big(\sigma_{m,n}^\Gamma\big) = \frac{1}{(m+n-1)\binom{m+n-2}{m-1}}.
    \end{align}
\end{lemma}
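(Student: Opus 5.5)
The plan is to make everything explicit in the Dicke bases and then use $SU(2)$ covariance to diagonalize the partial transpose. Write $N := m+n-2$. Expanding a degree-$N$ Dicke state along the split $(\C^2)^{\otimes N} = (\C^2)^{\otimes m-1}\otimes(\C^2)^{\otimes n-1}$ gives
\[
(V_{m-1}^*\otimes V_{n-1}^*)V_N\ket{k} \;=\; \mathbf{w}_k := \sum_{j+l=k} \sqrt{\tbinom{m-1}{j}\tbinom{n-1}{l}\big/\tbinom{N}{k}}\;\ket{j}\otimes\ket{l}.
\]

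\textbf{Rank.} First I check that $(V_{m-1}\otimes V_{n-1})(V_{m-1}^*\otimes V_{n-1}^*)$ acts as the identity on $\Sym^N(\C^2)$. This holds because $\Sym^N(\C^2)\subseteq \Sym^{m-1}(\C^2)\otimes\Sym^{n-1}(\C^2)$. Consequently $(m+n-1)\sigma_{m,n}=\sum_{k=0}^{N}\ketbra{\mathbf{w}_k}$. The vectors $\mathbf{w}_k$ are orthonormal: their supports lie in the disjoint sets $\{j+l=k\}$, and each has norm $1$ by Vandermonde's identity. Hence $\rank(\sigma_{m,n})=N+1=m+n-1$.

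\textbf{Positivity of the partial transpose.} For the partial transpose I would avoid matrix entries and use the twirl formula
\[
P_N=(N+1)\int \ketbra{\phi}^{\otimes N}\,d\phi
\]
over Haar-random unit $\phi\in\C^2$. This gives
\[
\sigma_{m,n}^\Gamma=\frac{N+1}{m+n-1}\int \ketbra{\phi}^{\otimes m-1}\otimes\ketbra{\bar\phi}^{\otimes n-1}\,d\phi
=\int \ketbra{\phi}^{\otimes m-1}\otimes\ketbra{\phi^\perp}^{\otimes n-1}\,d\phi,
\]
with everything read inside $\Sym^{m-1}\otimes\Sym^{n-1}$. Here $\phi^\perp=\epsilon\bar\phi$, and the unitary $\epsilon^{\otimes n-1}$ has been absorbed, which does not change the spectrum. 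This shows that $\sigma_{m,n}^\Gamma$ commutes with $U^{\otimes m-1}\otimes U^{\otimes n-1}$ for every $U\in SU(2)$.

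\textbf{Eigenvalues via Clebsch--Gordan.} Assuming $n\le m$, the space decomposes as $\bigoplus_{s=0}^{n-1}\Sym^{N-2s}(\C^2)$ with multiplicity one. By Schur's lemma, $\sigma_{m,n}^\Gamma=\sum_s \mu_s\,\Pi_s$, where $\Pi_s$ is the projection onto the $s$-th summand. I would compute $\mu_s$ in one of two ways. The first is to evaluate the integral on the highest-weight vector of each summand; the overlap of $\phi^{\otimes m-1}\otimes\phi^{\perp\otimes n-1}$ with it is a Beta-type integral. The second is to use the weight decomposition directly: in the $\ket{j}\ket{l'}$ basis, $\sigma^\Gamma$ preserves the weight $j-l'$, and each weight block is a Hankel matrix with entries $\binom{m-1}{j}^{1/2}\binom{n-1}{l'}^{1/2}\cdots/\binom{N}{j+l}$.

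\textbf{Identifying the minimum.} The extreme weight block $j=m-1$, $l'=0$ is $1\times1$, with entry
\[
\frac{1}{(m+n-1)\binom{N}{m-1}}.
\]
Since that weight occurs only in the top irrep ($s=0$, in the $\phi^\perp$ picture the highest weight of the tensor), this is exactly one of the $\mu_s$, and it matches the value in \eqref{eq:lam_min_sig}. Positive definiteness follows by viewing each weight block as a moment matrix. Using $1/\binom{N}{s}=(N+1)\int_0^1 x^s(1-x)^{N-s}\,dx$, each block becomes a Gram matrix of linearly independent functions, hence positive definite.

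The step I expect to be the main obstacle is proving that this $\mu$ is the \emph{smallest} of the $\mu_s$. My first attempt is to get a closed form for $\mu_s$ from the highest-weight computation and show it is monotone in $s$. Writing $\mu_s$ as an integral of $|\langle \phi^{m-1-s}\otimes\phi^{\perp\,n-1-s}\text{-type vector},\cdot\rangle|^2$ should produce a ratio of binomials such as $\binom{N}{m-1}^{-1}\binom{\cdot}{s}$, whose monotonicity is elementary. As a fallback, I would compare traces within each weight block and use interlacing.
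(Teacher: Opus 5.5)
Your rank argument is essentially the paper's. For the eigenvalue claim, the paper does no computation; it simply quotes \cite[Equation~(8)]{LSEBM25}. Your $SU(2)$/Schur's-lemma derivation is therefore a genuinely different, self-contained route, and its setup is sound. The twirl, the reduction to $\tau=\int (\ket{\phi}\bra{\phi})^{\otimes m-1}\otimes(\ket{\phi^\perp}\bra{\phi^\perp})^{\otimes n-1}\,d\phi$, the identification of the $1\times 1$ extreme weight block with $\mu_0$, and the moment-matrix proof of positive definiteness are all correct. One small imprecision: it is this $\epsilon^{\otimes n-1}$-conjugate $\tau$, not $\sigma_{m,n}^\Gamma$ itself, that commutes with $U^{\otimes m-1}\otimes U^{\otimes n-1}$. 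However, the proposal stops at the only nontrivial part of the second claim. Nothing written shows $\mu_0\le\mu_s$, and the overlap integral alone does not determine $\mu_s$: you also need the norm of the highest-weight vector.

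That step does close as you predict. Use the polynomial model $V_d\ket{j}\mapsto\sqrt{\binom{d}{j}}\,x^{d-j}y^j$, in which $x^{d-j}y^j$ has squared norm $1/\binom{d}{j}$ and $\phi^{\otimes d}\mapsto(\phi_0x+\phi_1y)^d$. There the highest-weight vector of the $s$-th summand is $h_s=x_1^{m-1-s}x_2^{n-1-s}(x_1y_2-y_1x_2)^s$. Since $|\phi_0\phi^\perp_1-\phi_1\phi^\perp_0|=1$ and $|\phi^\perp_0|=|\phi_1|$, you get $\langle h_s|\tau|h_s\rangle=\int_0^1 t^{m-1-s}(1-t)^{n-1-s}\,dt=\frac{(m-1-s)!\,(n-1-s)!}{(m+n-1-2s)!}$. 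For the norm, expand $h_s$ and apply the Chu--Vandermonde identity $\sum_k\binom{a-k}{s-k}\binom{b+k}{k}=\binom{a+b+1}{s}$:
\[
\|h_s\|^2=\sum_{k}\frac{\binom{s}{k}^2}{\binom{m-1}{k}\binom{n-1}{s-k}}=\frac{(s!)^2\,(m-1-s)!\,(n-1-s)!}{(m-1)!\,(n-1)!}\binom{m+n-1-s}{s}.
\]
Dividing gives
\[
\mu_s=\frac{(m-1)!\,(n-1)!}{s!\,(m+n-1-s)!}=\binom{m+n-1}{s}\,\mu_0,\qquad 0\le s\le\min\{m,n\}-1 .
\]
This is strictly increasing in $s$ on that range, because $\min\{m,n\}-1\le\frac{m+n-2}{2}$ keeps you on the increasing side of $\binom{m+n-1}{s}$. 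As checks, $\sum_s(m+n-1-2s)\mu_s=1$, and $m=n=2$ gives the familiar $1/6$ and $1/2$.

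With this inserted, your argument is a complete proof of the lemma and gives the full spectrum of $\sigma_{m,n}^\Gamma$ with multiplicities; the paper's citation buys only brevity. Your interlacing fallback is unnecessary, and as stated it would not pin down the ordering of the $\mu_s$.
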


\begin{proof}
    The claim that $\rank(\sigma_{m,n}) = m + n - 1$ follows from the facts that $\rank(\sigma_{m,n}) = \rank(P_{m+n-2})$ (since $\Sym^{m+n-2}(\C^2) \subseteq \Sym^{m-1}(\C^2) \otimes \Sym^{n-1}(\C^2)$) and $\rank(P_{m+n-2}) = m + n - 1$. This also shows that $\sigma_{m,n}$ is indeed a state as claimed: since $P_{m+n-2}$ is an orthogonal projection of rank $m+n-1$, $P_{m+n-2}/(m+n-1)$ is a density matrix, and $\sigma_{m,n}$ is isometrically equivalent to it.
    
    The claim about $\lambda_{\min}\big(\sigma_{m,n}^\Gamma\big)$ comes from \cite[Equation~(8)]{LSEBM25}.
\end{proof}

In particular, notice that \Cref{eq:lam_min_sig} implies that the minimal eigenvalue of $\sigma_{m,n}^\Gamma$ is strictly positive, so it has full rank, despite $\sigma_{m,n}$ itself having very low rank.

\subsection{Perturbation}\label{sec:perturb}

In this section, we introduce a direction in which we can perturb the state $\sigma_{m,n}$ from \Cref{sec:unperturbed_separable} so as to construct our desired high-Schmidt-number PPT quantum state. We let $\Q$ denote the rational numbers and let $\overline{\Q}$ be the algebraic numbers. To start, consider the $mn$ real numbers
\begin{align}\label{eq:thetaG}
    g_{i,j} = \exp\big(2^{(in+j)/(mn)}\big)
\end{align}
for $i \in \{0,1,2,\ldots,m-1\}$ and $j \in \{0,1,2,\ldots,n-1\}$.

\begin{lemma}\label{lem:alg_ind}
    The set of $mn$ numbers $\{g_{i,j}\}$ defined by \Cref{eq:thetaG} is algebraically independent over $\overline{\Q}$.
\end{lemma}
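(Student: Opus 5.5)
The plan is to derive the statement from the Lindemann--Weierstrass theorem, in its algebraic-independence form. That theorem says: if $\alpha_1,\dots,\alpha_N$ are algebraic numbers that are linearly independent over $\Q$, then $e^{\alpha_1},\dots,e^{\alpha_N}$ are algebraically independent over $\overline{\Q}$. Here $g_{i,j} = e^{\alpha_{i,j}}$ with $\alpha_{i,j} := 2^{(in+j)/(mn)}$. As $(i,j)$ ranges over $\{0,\dots,m-1\}\times\{0,\dots,n-1\}$, the exponent $k := in+j$ ranges bijectively over $\{0,1,\dots,mn-1\}$. So the $mn$ numbers $\alpha_{i,j}$ are precisely $\beta^0,\beta^1,\dots,\beta^{mn-1}$, where $\beta := 2^{1/(mn)}$ is the positive real $mn$-th root of $2$. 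Each $\beta^k$ is algebraic, being a root of $x^{mn}-2^k$.

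It therefore suffices to show that $1,\beta,\dots,\beta^{mn-1}$ are linearly independent over $\Q$. Let $N := mn$. By Eisenstein's criterion at the prime $2$, the polynomial $x^{N}-2$ is irreducible over $\Q$, so it is the minimal polynomial of $\beta$ and $[\Q(\beta):\Q] = N$. Hence a nontrivial rational relation $\sum_{k=0}^{N-1} c_k\beta^k = 0$ would give a nonzero polynomial of degree less than $N$ vanishing at $\beta$, which contradicts minimality. So the exponents are $\Q$-linearly independent distinct algebraic numbers, and Lindemann--Weierstrass gives that the $g_{i,j}$ are algebraically independent over $\overline{\Q}$.

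The only step needing care is citing the correct form of Lindemann--Weierstrass. The version with "distinct algebraic $\alpha_i$ implies $e^{\alpha_i}$ linearly independent over $\overline{\Q}$" is equivalent to the algebraic-independence version for $\Q$-linearly independent $\alpha_i$. To deduce the latter from the former, expand a polynomial relation $\sum_{\mathbf{a}} c_{\mathbf{a}}\prod_k e^{a_k\alpha_k}=0$ as $\sum_{\mathbf{a}} c_{\mathbf{a}} e^{\sum_k a_k\alpha_k}=0$. The exponents $\sum_k a_k\alpha_k$ are distinct for distinct exponent vectors $\mathbf{a}$, precisely because of $\Q$-linear independence. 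Linear independence of the exponentials then forces every $c_{\mathbf{a}}=0$. I will include this short reduction so the argument is self-contained modulo the classical theorem.
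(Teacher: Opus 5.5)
Your proof is correct and takes essentially the same route as the paper. Eisenstein's criterion gives $\Q$-linear independence of $1,\beta,\dots,\beta^{mn-1}$, and expanding a polynomial relation yields a $\overline{\Q}$-linear combination of exponentials of distinct algebraic numbers, which contradicts the Lindemann--Weierstrass theorem; this is the same reduction the paper carries out inline.
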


\begin{proof}
    The polynomial $x^{mn} - 2$ is irreducible over $\Q$ by Eisenstein's criterion (with the prime $2$), so it is the minimal polynomial of $\theta := 2^{1/(mn)}$. Since this polynomial has degree $mn$, the numbers $1$, $\theta$, $\theta^2$, $\ldots$, $\theta^{mn-1}$ are linearly independent over $\Q$ (since if they weren't linearly independent over $\Q$ then there would exist a degree-$(mn-1)$ polynomial with $\theta$ as a root, contradicting minimality of $x^{mn} - 2$).

    Suppose (for the sake of establishing a contradiction) that $\{g_{i,j}\}$ is not algebraically independent over $\overline{\Q}$. Then there exists a nonzero polynomial $p \in \overline{\Q}[x_0,x_1,\ldots,x_{mn-1}]$ such that
    \[
        p\big(\exp(\theta^0), \exp(\theta^1), \ldots, \exp(\theta^{mn-1})\big) = 0.
    \]
    Write $p$ as a finite sum of distinct monomials:
    \[
        p(x_0, x_1, \ldots, x_{mn-1}) = \sum_{\boldsymbol{\nu}} c_{\boldsymbol{\nu}} \prod_{q=0}^{mn-1} x_q^{\nu_q},
    \]
    where the sum is over all tuples $\boldsymbol{\nu} = (\nu_0, \nu_1, \ldots, \nu_{mn-1}) \in \N^{mn}$, $c_{\boldsymbol{\nu}} \in \overline{\Q}$ for all $\boldsymbol{\nu}$, and at least one (and only finitely many) of the coefficients $c_{\boldsymbol{\nu}}$ is nonzero. Plugging in $x_q = \exp(\theta^q)$ gives
    \begin{align}\label{eq:alg_indep_equation}
        0 & = \sum_{\boldsymbol{\nu}} c_{\boldsymbol{\nu}}\prod_{q=0}^{mn-1} \exp(\theta^q)^{\nu_q} = \sum_{\boldsymbol{\nu}} c_{\boldsymbol{\nu}}\exp\left( \sum_{q=0}^{mn-1} \nu_q\theta^q \right).
    \end{align}

    We claim that the quantities
    \[
        \sum_{q=0}^{mn-1} \nu_q\theta^q
    \]
    are all algebraic and distinct. Indeed, they are algebraic since $\theta$ is algebraic (and $\nu_q \in \N$ for all $q$). To see that they are distinct, notice that if two multi-indices $\boldsymbol{\nu}$ and $\boldsymbol{\mu}$ gave the same quantity then we would have
    \[
        \sum_{q=0}^{mn-1}(\nu_q-\mu_q)\theta^q = 0.
    \]
    However, linear independence of $1, \theta, \theta^2, \ldots, \theta^{mn-1}$ over $\Q$ then implies that $\nu_q = \mu_q$ for every $q$, so $\boldsymbol{\nu} = \boldsymbol{\mu}$.
    
    It follows that \Cref{eq:alg_indep_equation} is a nontrivial linear equation over $\overline{\Q}$ of exponentials of distinct algebraic numbers. This contradicts the Lindemann--Weierstrass theorem (see \cite{Pop24}, for example). We thus conclude that $\{g_{i,j}\}$ is algebraically independent over $\overline{\Q}$, as desired.
\end{proof}

Our main technical result of this section shows that any vector in $\C^m \otimes \C^n$ whose entries are the numbers $\{g_{i,j}\}$ (in any order) must ``point outside of'' any low-dimensional subspace with an algebraic basis:

\begin{lemma}\label{lem:perturb_algebraic}
    Let $m,n \geq 2$ be positive integers, let $\cS \subseteq \C^m \otimes \C^n$ be a subspace with a basis whose entries all belong to $\overline{\Q}$, and suppose $r \in \{0, 1, 2, \ldots, \min\{m,n\}-1\}$ satisfies
    \begin{align}\label{eq:dimension_condition}
        (m-r)(n-r) > \dim(\cS).
    \end{align}
    Suppose further that $\mathbf{v} \in \C^m \otimes \C^n$ is a vector with the numbers from \Cref{eq:thetaG} as its $mn$ entries (in any order), and that $z \in \C$ and $\mathbf{w} \in \cS$ are such that
    \begin{align}\label{eq:perturb_avoid}
        \SR(\mathbf{w} + z\mathbf{v}) \leq r.
    \end{align}
    Then $z = 0$. Furthermore, $\mathbf{v} \notin \cS$.
\end{lemma}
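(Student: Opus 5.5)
Our plan is to argue by contradiction via a transcendence-degree (dimension-counting) argument over $\overline{\Q}$. Assume $z \neq 0$. Dividing by $z$ and replacing $\mathbf{w}$ by $\mathbf{w}/z \in \cS$, we may assume $z = 1$. The relation $\SR(\mathbf{w}+\mathbf{v}) \leq r$ then says
\[
    \mathbf{v} \in X := \{\mathbf{u} - \mathbf{w} : \mathbf{u} \in \cD_r,\ \mathbf{w} \in \cS\},
\]
where $\cD_r$ is the determinantal variety of $m\times n$ matrices of rank at most $r$. The closure $\overline{X}$ is the image of the morphism $\cD_r \times \cS \to \C^m\otimes\C^n$, $(\mathbf{u},\mathbf{w})\mapsto \mathbf{u}-\mathbf{w}$. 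Both $\cD_r$ (cut out by $(r+1)\times(r+1)$ minors, which have integer coefficients) and $\cS$ (which has an algebraic basis) are defined over $\overline{\Q}$, so this morphism is defined over $\overline{\Q}$. Hence $\overline{X}$ is an irreducible variety defined over $\overline{\Q}$, of dimension
\[
    \dim \overline{X} \leq \dim \cD_r + \dim \cS = mn-(m-r)(n-r)+\dim\cS < mn,
\]
using the formula $\dim\cD_r = mn-(m-r)(n-r)$ from \cite{CMW08} together with \Cref{eq:dimension_condition}.

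Since $\overline{X}$ is a proper closed subvariety of affine space defined over $\overline{\Q}$, there is a nonzero polynomial $f$ with coefficients in $\overline{\Q}$ that vanishes on $\overline{X}$. To justify this concretely, we first note that the vanishing ideal of $\overline{X}$ is generated by polynomials over $\overline{\Q}$, because $\overline{X}$ is the closure of the image of a $\overline{\Q}$-morphism and elimination theory preserves the field of definition. This ideal is nonzero because the dimension is less than $mn$. Evaluating $f(\mathbf{v}) = 0$ then gives a nontrivial algebraic relation over $\overline{\Q}$ among the entries of $\mathbf{v}$, which are the $g_{i,j}$ in some order. This contradicts \Cref{lem:alg_ind}, so $z = 0$.

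For the final claim, suppose $\mathbf{v} \in \cS$. Take $\mathbf{w} = -\mathbf{v} \in \cS$ and $z = 1$. Then $\mathbf{w}+z\mathbf{v} = 0$ has Schmidt rank $0 \leq r$, and the first part forces $z = 0$, a contradiction. Alternatively, we can apply the same polynomial argument directly to $\cS$, which has dimension less than $mn$.

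The main obstacle is the step asserting that $\overline{X}$ is cut out by polynomials with $\overline{\Q}$ coefficients. We would handle it by a transcendence degree argument instead of working with ideals. Any point $\mathbf{u}-\mathbf{w}$ of $X$ can be written in terms of parameters: $\mathbf{u} = AB$ with $A$ of size $m\times r$ and $B$ of size $r\times n$, and $\mathbf{w}$ a combination of the algebraic basis of $\cS$ with coefficients $c_k$. Then every entry of $\mathbf{v}$ lies in $\overline{\Q}(A,B,c)$. That field is generated by the parameters, but we must bound its transcendence degree by the true dimension rather than the naive parameter count $r(m+n)+\dim\cS$, which can exceed $mn$. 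To do this, we would use the standard fact that for a dominant morphism defined over $\overline{\Q}$, the coordinates of any point of the image generate a field of transcendence degree at most the dimension of the image. Concretely, the coordinate functions of $\overline{X}$ satisfy $mn - \dim\overline{X}$ algebraically independent relations over $\overline{\Q}$. This bounds the transcendence degree of $\overline{\Q}(\mathbf{v})$ by $\dim\overline{X} < mn$, which contradicts \Cref{lem:alg_ind}, since the entries of $\mathbf{v}$ are $mn$ algebraically independent numbers.
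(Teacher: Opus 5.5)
Your argument is correct and is essentially the paper's proof: the paper likewise shows that $z\neq 0$ would put $\mathbf{v}$ in $Z=\closure{\cS+\cD_r}$, a proper subvariety of dimension at most $\dim\cS + r(m+n-r) < mn$ cut out by polynomials over $\overline{\Q}$, contradicting \Cref{lem:alg_ind}, and it gets $\mathbf{v}\notin\cS$ from $\cS\subseteq Z$ (your choice $\mathbf{w}=-\mathbf{v}$, $z=1$ works equally well, since $\SR(0)=0\le r$). Your elimination-theory remark already justifies the field-of-definition step, which the paper simply asserts, so the closing transcendence-degree paragraph is unnecessary; it is also not an independent route, because the ``standard fact'' it invokes presupposes that $\overline{X}$ is defined over $\overline{\Q}$.
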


\begin{proof}
    We refer the reader to~\cite{Har92} for the algebraic geometry background used in this proof. Let
    \[
        \cD_r := \{\mathbf{x} \in \C^m \otimes \C^n : \SR(\mathbf{x}) \leq r\}.
    \]
    Under the usual identification of $\C^m \otimes \C^n$ with the set of $m \times n$ complex matrices, this is the determinantal variety of matrices of rank at most $r$. In particular, $\dim(\cD_r) = r(m+n-r)$, and $\cD_r$ is the common zero set of the $(r+1) \times (r+1)$ minors of the corresponding coefficient matrix (see e.g.~\cite[Proposition 12.2]{Har92} or~\cite[Lemma~4]{CMW08}).

    Consider the algebraic set
    \[
        Z := \closure{\cS + \cD_r},
    \]
    where the ``$+$'' here is the usual Minkowski sum and ``$\closure{\phantom{\cS}}$'' denotes the Zariski closure. Since the map $F : \cS \times \cD_r \rightarrow \cS + \cD_r \subseteq \C^m \otimes \C^n$ defined by $F(\mathbf{a},\mathbf{b}) = \mathbf{a} + \mathbf{b}$ is a polynomial map, we have
    \begin{align*}
        \dim(Z) & \leq \dim(\cS) + \dim(\cD_r) = \dim(\cS) + r(m+n-r) < mn,
    \end{align*}
    where the first inequality follows from e.g.~\cite[Theorem 11.12]{Har92}), and the second follows from~\Cref{eq:dimension_condition}. It follows that $Z$ is a proper algebraic subset of $\C^m \otimes \C^n$.
    
    We claim that $Z$ is cut out by polynomials with coefficients in $\overline{\Q}$. To see this, it suffices to notice that $\cS$ is cut out by polynomials with coefficients in $\overline{\Q}$ (since it has a basis with entries in this set), $\cD_r$ is cut out by polynomials with coefficients in $\Q$ (in fact, these polynomials are determinants, which have coefficients in $\{-1,0,1\}$), and $F$ has coefficients in $\Q$.
    
    Now suppose that \Cref{eq:perturb_avoid} holds for some $z \neq 0$. Then $\mathbf{w} + z\mathbf{v} \in \cD_r$, so
    \[
        \mathbf{v} = \frac{\mathbf{w}+z\mathbf{v}}{z} - \frac{\mathbf{w}}{z} \in \cD_r + \cS \subseteq Z.
    \]
    However, \Cref{lem:alg_ind} tells us that the $mn$ entries of $\mathbf{v}$ are algebraically independent over $\overline{\Q}$. As a result, $\mathbf{v}$ cannot belong to $Z$, since $\dim(Z) < mn$ (membership in $Z$ would imply the existence of a nonzero polynomial $p$ with coefficients in $\overline{\Q}$ such that $p(\mathbf{v}) = 0$). We thus conclude that $z \neq 0$ is impossible, so $z = 0$.

    Finally, since $\cS \subseteq Z$, the fact that $\mathbf{v} \notin Z$ implies $\mathbf{v} \notin \cS$, which completes the proof.
\end{proof}

\subsection{Finishing the proof}\label{sec:finalize_proof}

We now put together the ingredients from the two previous subsections in order to prove our main result. In particular, the following theorem immediately implies \Cref{thm:intro_main}:

\begin{theorem}\label{thm:main_explicit}
    Let $m,n \geq 2$ be positive integers, let $\sigma_{m,n} \in \Density(\C^m \otimes \C^n)$ be as in \Cref{lem:low_rank_pt_high_rank}, and let $\mathbf{v} \in \C^m \otimes \C^n$ be a vector with the numbers from \Cref{eq:thetaG} as its $mn$ entries (in any order). Define
    \begin{align}\label{eq:explicitg}
        C := \frac{1}{(m+n-1)\binom{m+n-2}{m-1}} \quad \text{and} \quad \ket{g} := \frac{\mathbf{v}}{\|\mathbf{v}\|}.
    \end{align}
    Then the state
    \[
        \rho := \frac{1}{2C+1}\sigma_{m,n} + \frac{2C}{2C+1}\ketbra{g}
    \]
    is PPT and has
    \begin{align}\label{eq:SN_bound}
        \SN(\rho) \geq \left\lceil \frac{m + n - \sqrt{(m - n)^2 + 4(m + n - 1)}}{2} \right\rceil.
    \end{align}
\end{theorem}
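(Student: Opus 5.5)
The proof splits into two independent parts: the PPT property, which follows from the eigenvalue bound in \Cref{lem:low_rank_pt_high_rank}, and the Schmidt number bound, which follows from \Cref{lem:perturb_algebraic} applied to the image of $\sigma_{m,n}$.

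\textbf{PPT.} We have
\[
\rho^\Gamma = \tfrac{1}{2C+1}\big(\sigma_{m,n}^\Gamma + 2C\,(\ketbra{g})^\Gamma\big).
\]
By \Cref{lem:low_rank_pt_high_rank}, $\sigma_{m,n}^\Gamma \succeq C\, I$. The partial transpose of a pure state $\ketbra{g}$ is a Hermitian operator with eigenvalues $\pm s_i s_j$, where the $s_i$ are the Schmidt coefficients of $\ket{g}$. Since $\sum_i s_i^2=1$, its smallest eigenvalue is at least $-\max_{i\neq j}s_is_j\ge -1/2$. It follows that $\rho^\Gamma \succeq \frac{1}{2C+1}(C - 2C\cdot\tfrac12)I = 0$, so $\rho$ is PPT. (Possibly the authors only use the cruder bound $\|(\ketbra{g})^\Gamma\|\le 1$ with a different weight; the $1/2$ bound is exactly what makes the constant $2C$ work.)

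\textbf{Schmidt number.} Let $r$ be the largest integer in $\{0,\dots,n-1\}$ (with $n\le m$) satisfying $(m-r)(n-r) > m+n-1$. We show $\SN(\rho) > r$.

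Suppose instead that $\rho=\sum_j p_j\ketbra{v_j}$ with every $\SR(\ket{v_j})\le r$. Each $\ket{v_j}$ lies in $\im(\rho)=\cS+\spn(\mathbf{v})$, where $\cS := \im(\sigma_{m,n})$. This $\cS$ is spanned by the vectors $(V_{m-1}^*\otimes V_{n-1}^*)V_{m+n-2}\ket{k}$, whose entries are square roots of rationals and hence algebraic. Moreover $\dim\cS = m+n-1$ by \Cref{lem:low_rank_pt_high_rank}. Writing $\ket{v_j}=\mathbf{w}_j+z_j\mathbf{v}$ with $\mathbf{w}_j\in\cS$, \Cref{lem:perturb_algebraic} forces $z_j=0$, so every $\ket{v_j}\in\cS$. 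Hence $\im(\rho)\subseteq \cS$. But $\ket{g}\in\im(\rho)$ because $\rho\succeq \frac{2C}{2C+1}\ketbra{g}$, and this would give $\mathbf{v}\in\cS$, contradicting the final assertion of \Cref{lem:perturb_algebraic}. Therefore $\SN(\rho)\ge r+1$.

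\textbf{Arithmetic.} It remains to check that $r+1$ equals the ceiling in \Cref{eq:SN_bound}. The condition $(m-r)(n-r)>m+n-1$ is the quadratic inequality $r^2-(m+n)r+mn-(m+n-1)>0$. Its smaller root is
\[
r_-=\frac{m+n-\sqrt{(m-n)^2+4(m+n-1)}}{2}.
\]
Integers $r<r_-$ satisfy the inequality, and $r_-\le n-1$ is automatic because $(m-n+1)\cdot 1 \le m+n-1$. So the largest admissible $r$ is $\lceil r_-\rceil-1$, giving $\SN(\rho)\ge\lceil r_-\rceil$. When $\lceil r_-\rceil\le 1$ the bound is trivial, which handles the edge case $r<0$.

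\textbf{Main obstacle.} The step needing the most care is the PPT estimate, specifically the spectral bound $\lambda_{\min}((\ketbra{g})^\Gamma)\ge -1/2$. I would prove it by writing $\ket{g}=\sum_i s_i\ket{a_i}\ket{b_i}$ in Schmidt form. The partial transpose is then $\sum_{i,j}s_is_j\ket{a_i}\bra{a_j}\otimes\ket{\bar b_j}\bra{\bar b_i}$, which is a weighted swap with eigenvalues $s_i^2$ and $\pm s_is_j$, and $s_is_j\le (s_i^2+s_j^2)/2\le 1/2$.
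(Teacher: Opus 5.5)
Your proposal is correct and matches the paper's proof essentially step for step. The PPT part uses the same bound $\lambda_{\min}\big((\ketbra{g})^\Gamma\big)\geq -1/2$ from the Schmidt coefficients, together with $\lambda_{\min}(\sigma_{m,n}^\Gamma)=C$. The Schmidt-number part applies \Cref{lem:perturb_algebraic} to $\cS=\im(\sigma_{m,n})$, which forces every Schmidt-rank-$\leq r$ vector of $\im(\rho)$ into $\cS$ and contradicts $\mathbf{v}\notin\cS$. Your extra checks fill in details the paper leaves implicit: the explicit algebraic basis of $\cS$, and that $r$ lies in $\{0,\dots,\min\{m,n\}-1\}$ and gives exactly the stated ceiling.
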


\begin{proof}
    We first show that $\rho$ is PPT. For every unit vector $\ket{g}$, $\lambda_{\min}\big((\ketbra{g})^\Gamma\big) \geq -1/2$. Indeed, if the Schmidt coefficients of $\ket{g}$ are $\gamma_1 \geq \gamma_2 \geq \cdots \geq \gamma_{\min\{m,n\}} \geq 0$ then the negative eigenvalues of $(\ketbra{g})^\Gamma$ are $-\gamma_i\gamma_j$ \cite[Lemma~1]{JP18}, which are no smaller than $-1/2$ (since $2\gamma_i\gamma_j \leq \gamma_i^2 + \gamma_j^2 \leq 1$). Together with $\lambda_{\min}\big(\sigma_{m,n}^\Gamma) = C$ from \Cref{eq:lam_min_sig}, this gives
    \begin{align*}
        \lambda_{\min}\big(\rho^\Gamma\big) & \geq \frac{1}{2C+1}\lambda_{\min}\big(\sigma_{m,n}^\Gamma\big) + \frac{2C}{2C+1}\lambda_{\min}\big((\ketbra{g})^\Gamma\big) \geq \frac{C}{2C+1} - \frac{C}{2C+1} = 0,
    \end{align*}
    so $\rho$ has positive partial transpose.

    To bound the Schmidt number of $\rho$, first notice that if
    \begin{align}\label{eq:r_value}
        r := \left\lceil \frac{m + n - \sqrt{(m - n)^2 + 4(m + n - 1)}}{2} \right\rceil - 1
    \end{align}
    then $(m-r)(n-r) > m + n - 1 = \rank(\sigma_{m,n})$ (indeed, solving the quadratic equation $(m-r)(n-r) = m + n - 1$ for $r$ gives exactly the quantity inside the ceiling in \Cref{eq:r_value}, and our choice of $r$ is strictly smaller than this quantity).
    
    Now suppose (for the sake of establishing a contradiction) that $\SN(\rho) \leq r$, so that
    \begin{align}\label{eq:proof_rho_sr_r_decomp}
        \rho = \sum_j p_j \ketbra{x_j}
    \end{align}
    for some probability vector $\mathbf{p}$ and unit vectors $\ket{x_j} \in \im(\rho) = \spn(\im(\sigma_{m,n}), \ket{g})$ with $\SR(\ket{x_j}) \leq r$ for all $j$. Since the entries of $\sigma_{m,n}$ are all in $\overline{\Q}$, $\im(\sigma_{m,n})$ has a basis consisting of vectors with entries from $\overline{\Q}$. \Cref{lem:perturb_algebraic} (applied to $\cS = \im(\sigma_{m,n})$) then shows that $\ket{x_j} \in \im(\sigma_{m,n})$ for all $j$. However, the decomposition of \Cref{eq:proof_rho_sr_r_decomp} then implies that $\im(\rho) \subseteq \im(\sigma_{m,n})$, which contradicts the fact that $\ket{g} \notin \im(\sigma_{m,n})$ (which also comes from \Cref{lem:perturb_algebraic}). This completes the proof.
\end{proof}

The particular scalars that were used in the definition of $\rho$ in \Cref{thm:main_explicit} are only required to show that $\rho$ is PPT; the same Schmidt number guarantee holds for every nontrivial convex combination of $\sigma_{m,n}$ and $\ketbra{g}$. As a result, the same proof shows that the state
\[
    \rho_p := (1-p)\sigma_{m,n} + p\ketbra{g}
\]
is PPT and satisfies the Schmidt number bound of \Cref{eq:SN_bound} for all $p \in (0,2C/(2C+1)]$.

\bibliographystyle{alpha}
\bibliography{references}

@article{TH00,
  author = {Terhal, Barbara M. and Horodecki, Pawe{\l}},
  title = {{Schmidt} number for density matrices},
  journal = {Physical Review A}, volume = {61}, pages = {040301(R)}, year = {2000},
  doi = {10.1103/PhysRevA.61.040301}, eprint = {quant-ph/9911117}, archivePrefix = {arXiv},
  note = {\doi{10.1103/PhysRevA.61.040301}; \arxiv{quant-ph/9911117}}
}

@article{Peres96,
  author = {Peres, Asher},
  title = {Separability criterion for density matrices},
  journal = {Physical Review Letters}, volume = {77}, pages = {1413--1415}, year = {1996},
  doi = {10.1103/PhysRevLett.77.1413},
  note = {\doi{10.1103/PhysRevLett.77.1413}; \arxiv{quant-ph/9604005}}
}

@article{HHH96,
  author = {Horodecki, Micha{\l} and Horodecki, Pawe{\l} and Horodecki, Ryszard},
  title = {Separability of mixed states: Necessary and sufficient conditions},
  journal = {Physics Letters A}, volume = {223}, pages = {1--8}, year = {1996},
  doi = {10.1016/S0375-9601(96)00706-2},
  note = {\doi{10.1016/S0375-9601(96)00706-2}; \arxiv{quant-ph/9605038}}
}

@article{Horodecki97,
  author = {Horodecki, Pawe{\l}},
  title = {Separability criterion and inseparable mixed states with positive partial transposition},
  journal = {Physics Letters A},
  volume = {232},
  pages = {333--339},
  year = {1997},
  doi = {10.1016/S0375-9601(97)00416-7},
  note = {\doi{10.1016/S0375-9601(97)00416-7}; \arxiv{quant-ph/9703004}}
}

@article{HHH98,
  author = {Horodecki, Micha{\l} and Horodecki, Pawe{\l} and Horodecki, Ryszard},
  title = {Mixed-state entanglement and distillation: Is there a ``bound'' entanglement in nature?},
  journal = {Physical Review Letters}, volume = {80}, pages = {5239--5242}, year = {1998},
  doi = {10.1103/PhysRevLett.80.5239},
  note = {\doi{10.1103/PhysRevLett.80.5239}; \arxiv{quant-ph/9801069}}
}

@article{SBL01,
  author = {Sanpera, Anna and Bru{\ss}, Dagmar and Lewenstein, Maciej},
  title = {{Schmidt}-number witnesses and bound entanglement},
  journal = {Physical Review A},
  volume = {63},
  pages = {050301(R)},
  year = {2001},
  doi = {10.1103/PhysRevA.63.050301},
  note = {\doi{10.1103/PhysRevA.63.050301}; \arxiv{quant-ph/0009109}}
}

@article{YLT16,
  author = {Yang, Yu and Leung, Denny H. and Tang, Wai-Shing},
  title = {All {$2$}-positive linear maps from {$M_3(\mathbb C)$} to {$M_3(\mathbb C)$} are decomposable},
  journal = {Linear Algebra and its Applications}, volume = {503}, pages = {233--247}, year = {2016},
  doi = {10.1016/j.laa.2016.03.050},
  note = {\doi{10.1016/j.laa.2016.03.050}; \arxiv{1603.03534}}
}

@article{HLLMH18,
  author = {Huber, Marcus and Lami, Ludovico and Lancien, C{\'e}cilia and M{\"u}ller-Hermes, Alexander},
  title = {High-dimensional entanglement in states with positive partial transposition},
  journal = {Physical Review Letters}, volume = {121}, pages = {200503}, year = {2018},
  doi = {10.1103/PhysRevLett.121.200503},
  note = {\doi{10.1103/PhysRevLett.121.200503}; \arxiv{1802.04975}}
}

@article{SP18,
  author = {Sindici, Enrico and Piani, Marco},
  title = {Simple class of bound entangled states based on the properties of the antisymmetric subspace},
  journal = {Physical Review A}, volume = {97}, pages = {032319}, year = {2018},
  doi = {10.1103/PhysRevA.97.032319},
  note = {\doi{10.1103/PhysRevA.97.032319}; \arxiv{1708.06595}}
}

@article{PV19,
  author = {P{\'a}l, K{\'a}roly F. and V{\'e}rtesi, Tam{\'a}s},
  title = {Class of genuinely high-dimensionally entangled states with a positive partial transpose},
  journal = {Physical Review A}, volume = {100}, pages = {012310}, year = {2019},
  doi = {10.1103/PhysRevA.100.012310},
  note = {\doi{10.1103/PhysRevA.100.012310}; \arxiv{1904.08282}}
}

@article{KG24,
  author = {Krebs, Robin and Gachechiladze, Mariami},
  title = {High {Schmidt} number concentration in quantum bound entangled states},
  journal = {Physical Review Letters}, volume = {132}, pages = {220203}, year = {2024},
  doi = {10.1103/PhysRevLett.132.220203},
  note = {\doi{10.1103/PhysRevLett.132.220203}; \arxiv{2402.12966}}
}

@article{KG26,
  author = {Krebs, Robin and Gachechiladze, Mariami},
  title = {Scaling bound entanglement through local extensions},
  journal = {Physical Review A}, volume = {113}, pages = {062428}, year = {2026},
  doi = {10.1103/tcq7-q96m},
  note = {\doi{10.1103/tcq7-q96m}; \arxiv{2509.07086}}
}

@misc{Par26,
  author = {Park, Sang-Jun},
  title = {{$k$}-Positivity and high-dimensional bound entanglement under symplectic group symmetries},
  year = {2026}, eprint = {2602.09860}, archivePrefix = {arXiv}, primaryClass = {quant-ph},
  note = {\arxiv{2602.09860v3}}
}

@article{CMW08,
  author = {Cubitt, Toby S. and Montanaro, Ashley and Winter, Andreas},
  title = {On the dimension of subspaces with bounded {Schmidt} rank},
  journal = {Journal of Mathematical Physics}, volume = {49}, pages = {022107}, year = {2008},
  doi = {10.1063/1.2862998},
  note = {\doi{10.1063/1.2862998}; \arxiv{0706.0705}}
}

@incollection{Pop24,
  author = {Popescu, Sever Angel},
  title = {A Simple and Self-contained Proof for the {L}indemann--{W}eierstrass Theorem},
  booktitle = {New Frontiers in Number Theory and Applications},
  publisher = {Springer Nature Switzerland},
  address = {Cham},
  pages = {349--366},
  year = {2024},
  doi = {10.1007/978-3-031-51959-8_16},
  eprint = {2306.14352},
  archivePrefix = {arXiv},
  primaryClass = {math.NT},
  note = {\doi{10.1007/978-3-031-51959-8_16}; \arxiv{2306.14352}}
}

@article{Nechita18,
  author = {Nechita, Ion},
  title = {On the separability of unitarily invariant random quantum states: The unbalanced regime},
  journal = {Advances in Mathematical Physics}, volume = {2018}, pages = {7105074}, year = {2018},
  doi = {10.1155/2018/7105074},
  note = {\doi{10.1155/2018/7105074}; \arxiv{1802.00067}}
}

@article{LSEBM25,
  author = {Louvet, Jonathan and Serrano-Ens{\'a}stiga, Eduardo and Bastin, Thierry and Martin, John},
  title = {Nonequivalence between absolute separability and positive partial transposition in the symmetric subspace},
  journal = {Physical Review A},
  volume = {111},
  pages = {042418},
  year = {2025},
  doi = {10.1103/PhysRevA.111.042418},
  eprint = {2411.16461},
  archivePrefix = {arXiv},
  primaryClass = {quant-ph},
  note = {\doi{10.1103/PhysRevA.111.042418}; \arxiv{2411.16461}}
}

@book{Wat18,
  author = {Watrous, John},
  title = {The Theory of Quantum Information},
  publisher = {Cambridge University Press},
  year = {2018},
  doi = {10.1017/9781316848142},
  note = {\doi{10.1017/9781316848142}; \url{https://cs.uwaterloo.ca/~watrous/TQI/}}
}

@article{Sto82,
    author  = {St{\o}rmer, Erling},
    title   = {Decomposable Positive Maps on {$C^*$}-Algebras},
    journal = {Proceedings of the American Mathematical Society},
    volume  = {86},
    number  = {3},
    pages   = {402--404},
    year    = {1982},
    doi     = {10.1090/S0002-9939-1982-0671203-5},
    note = {\doi{10.1090/S0002-9939-1982-0671203-5}}
}

@article{JP18,
  author = {Johnston, Nathaniel and Patterson, Everett},
  title = {The inverse eigenvalue problem for entanglement witnesses},
  journal = {Linear Algebra and its Applications},
  volume = {550},
  pages = {1--27},
  year = {2018},
  doi = {10.1016/j.laa.2018.03.043},
  eprint = {1708.05901},
  archivePrefix = {arXiv},
  primaryClass = {quant-ph},
  note = {\doi{10.1016/j.laa.2018.03.043}; \arxiv{1708.05901}}
}

@misc{derksen2025constructive,
  author = {Derksen, Harm and Lovitz, Benjamin},
  title = {Constructive counterexamples to the additivity of minimum output {R}{\'e}nyi entropy of quantum channels for all $p>1$},
  year = {2025},
  eprint = {2510.07547},
  archivePrefix = {arXiv},
  primaryClass = {quant-ph},
  note = {\arxiv{2510.07547}}
}

@article{li2017indistinguishability,
  author = {Li, Yinan and Wang, Xin and Duan, Runyao},
  title = {Indistinguishability of bipartite states by positive-partial-transpose operations in the many-copy scenario},
  journal = {Physical Review A},
  volume = {95},
  number = {5},
  pages = {052346},
  year = {2017},
  doi = {10.1103/PhysRevA.95.052346},
  eprint = {1702.00231},
  archivePrefix = {arXiv},
  primaryClass = {quant-ph},
  note = {\doi{10.1103/PhysRevA.95.052346}; \arxiv{1702.00231}}
}

@book{Har92,
  author = {Harris, Joe},
  title = {Algebraic Geometry: A First Course},
  series = {Graduate Texts in Mathematics},
  volume = {133},
  publisher = {Springer},
  address = {New York},
  year = {1992},
  doi = {10.1007/978-1-4757-2189-8},
  note = {\doi{10.1007/978-1-4757-2189-8}}
}

\end{document}